\documentclass[a4paper,preprintnumbers,floatfix,superscriptaddress,pra,twocolumn,showpacs,notitlepage,longbibliography]{revtex4-1}

\usepackage[utf8]{inputenc}
\usepackage[T1]{fontenc}
\usepackage[sc,osf]{mathpazo}\linespread{1.05}
\usepackage{amsmath, amsthm, amssymb,amsfonts,mathbbol,amstext}
\usepackage{graphicx}
\usepackage{dcolumn}
\usepackage{bm}
\usepackage{bbm}
\usepackage{hyperref}
\usepackage{mathtools}
\usepackage{comment}
\usepackage{color}
\usepackage{multirow}
\usepackage{pgf,tikz}
\usepackage{mathrsfs}
\usepackage{physics}
\usetikzlibrary{arrows}
\usepackage{soul,xcolor}
\usepackage{tikz}
\usepackage{adjustbox}
\DeclareMathOperator{\spn}{Span}

\newtheorem{theorem}{Theorem}
\newtheorem{definition}{Definition}


\newcommand\myeq{\stackrel{\mathclap{\normalfont\mbox{$\epsilon$}}}{=}}

\begin{document}

\title{Preprocessing operations and the reverse compression}
\date{\today}

\author{Matheus Capela}
\email{matheus@qpequi.com}
\affiliation{Institute of Physics, Federal University of Goi\'{a}s, POBOX 131, 74001-970, Goi\^{a}nia, Brazil}

\author{Fabio Costa}
\email{f.costa@uq.edu.au}
\affiliation{Centre for Engineered Quantum Systems, School of Mathematics and Physics, The University of Queensland, St Lucia, QLD 4072, Australia}

\begin{abstract}
The task of compression of data -- as stated by the source coding theorem -- is one of the cornerstones of information theory. Data compression usually exploits statistical redundancies in the data according to its prior distribution. Motivated by situations where one does not have access to the statistics of data, but has some information about a transformation that is going to be applied to it, we propose a novel method for data compression called reverse compression. It is defined in such a way that works for both classical and quantum information processes, and furthermore relies exclusively on the channel to be used: all input data leading to indistinguishable outputs is compressed to the same state, regardless of their prior distribution. Moreover, this process can be characterized as a higher order operation within the type of preprocessing. We also consider as an example the application of the method to the classical and quantum erasure channel. The examples suggest that noiseless reverse compression can take place only in trivial cases, although meaningful instances of noisy reverse compression can exist.
\end{abstract}

\maketitle


\section{Introduction}
 
Information theory is ubiquitous feature of modern science. Its cornerstone is Shannon's mathematical theory of communication \cite{shannon48}, dealing with two main problems: channel coding and source coding. Channel capacity, defined in the former as the amount of information that can be reliably sent through a channel, is shown to be equal to the mutual information between the input and output random variables maximized over all possible input probability distributions. Thus, it relies only on the channel features. In the latter,  the concept of Shannon entropy characterizes the limit of how much a random source can be reliably compressed. Since then, many links with other branches of science have been achieved. For instance, it allowed a deep connection with statistical mechanics through the Jaynes' maximum entropy principle \cite{jaynes1957information}, setting the equilibrium probability distribution associated with a thermodynamical (statistical) system as the one maximizing Shannon entropy respecting the relevant physical constraints. 

We address here the problem of compression of a source without relying on its statistical features, but only on the knowledge of properties of the channel in which the messages are to be send through. The practical situation of interest can be thought, for instance, as the task of transmitting a picture through some known processing algorithm or device, without any prior knowledge of the statistical distribution of the pixels composing the picture. For example, the device could be some analysis tool, returning a set of properties of the picture. By knowing in advance that the device has little sensitivity to some aspects of the picture (e.g., the output might not depend much on certain differences in color), we can save communication bandwidth by compressing the image before sending it to the device. As we want the protocol to be independent of the input distribution, Shannon's compression method does not help towards this task.

We call such a procedure reverse compression of a source with respect to a channel. We formalize this preprocessing operation for both classical and quantum information processes. We also show examples where compression is possible given some finite noise. However, it appears that no noiseless compression can be achieved, except for trivial examples.

The reverse compression protocol introduced here can be understood as a particular case of superchannel operation, which is a problem dealt with by Shannon in the early days of Information Theory. His interest was in the pre-ordering of classical channels. He was able to show that if a channel can simulate another one by preprocessing and postprocessing with shared randomness, and there is a code for the latter, then there is a code at least as efficient for the former \cite{shannon1958note}. Much development has been achieved in this problem relating the characterization of specific classes of channels \cite{zhang2013analytical,nasser2018characterizations,haddadpour2016simulation,d2019distance}. Nevertheless, it remains as an open area of research. An area of statistical analysis which is a particular case of the one considered by Shannon is the Theory of Comparison \cite{blackwell1953equivalent}. It consists of the simulation of channels with the same input space by postprocessing only. Recent developments and extension to the Quantum Information Theory have been achieved in \cite{jenvcova2016comparison,buscemi2017comparison}. On the other hand, it is also possible to define relevant information processing tasks involving exclusively preprocessing, for instance, the reverse compression presented here. The authors in \cite{buscemi2005clean} have considered the preprocessing of Positive Operator Value Measures (POVM).

Recent interest has been shown in the quantum theory of superchannels. For instance, Ref.~\cite{gour2019comparison} has considered the problem of quantum channel simulation by superchannels with preprocessing and postprocessing with quantum side information, and the extension of the notion of entropy for channels. With a more physical motivation, the authors in \cite{faist2019thermodynamic} have considered the problem of quantum channel simulation by so called thermal operations, and derived that in the asymptotic limit the thermodynamical cost of the simulation is given by a unique measure: the difference in thermodynamical capacity of the correspondent channels. Furthermore, superchannels extend natururally to more general higher order transformations, which are central in the study of quantum causal structures \cite{chiribella2008transforming, chiribella09b, oreshkov12, araujo15, oreshkov15, costa2016, Perinotti2017, Giarmatzi2018, Bisio2019, Kissinger2019}.

The paper is organized as follows. In section \ref{notation} we define the notation to be used along the text. In section \ref{shannon} we discuss the Shannon compression method in order to make clear the distinction with the preprocessing task considered here. In section \ref{classical} we define the reverse compression of data for memoryless classical channels used without feedback. The exposition starts from the case of single-shot uses of the channel, leading to the asymptotic setting, with arbitrarily many uses. In section \ref{quantum} we consider the definition of reverse compression of data for quantum channels. As a general rule we consider as an application the case of the erasure channel. Finally, in section \ref{conclusion} we make our final remarks and discussions.

\section{Notation and Preliminaries} \label{notation}

We denote random variables by English upper case letters, their outcomes by lower case letters, and alphabets -- the collection of all possible outcomes, also called the values of a random variable -- by calligraphic letters. For instance, $X$ represents a random variable with outcomes denoted by $x$, while $\mathcal{X}$ represents its alphabet. In what follows, we only consider alphabets with a finite number of outcomes denoted as $|\mathcal{X}|$. More generally, the symbol $|\cdot|$ represents the cardinality of a set.

A probability distribution of the random variable $X$, denoted as $P(X)$, is an assignment of real numbers to each outcome in its alphabet, written as $P(x)$ and called probability masses. The probability masses satisfy $0 \leq P(x) \leq 1$ for any outcome, and $\sum_{x\in \mathcal{X}} P(x)=1$. The expected value of a real valued random variable $R$ is defined as $E(R) \coloneqq \sum_{r\in \mathcal{R}}P(r)r$ and its variance as $\mathrm{var}(R) \coloneqq E([R-E(R)]^2)$. A sequence of $k$ random variables $X^k \coloneqq (X_1,\cdots,X_k)$ taking values in the same alphabet $\mathcal{X}$ is said to be independent and identically distributed (i.i.d.) if it is drawn according to a probability of the form $P(X^k) = \prod_{i=1}^k P(X_i)$, where $P(X_i)=P(X_j)$ for any $1 \leq i,j \leq k$. We define the sample average of a sequence of real-valued random variables as $\overline{R^k} \coloneqq \frac{1}{k} \sum_{i=1}^k R_i$. Note the sample average is defined here as a random variable assigning the arithmetic average to each outcome $r^k$. Additionally, we say the sequence of real-valued joint random variables $( R^k )_{k \in \mathbb{N}}$, each drawn according to $P(R^k)$, converges in probability to a real number $\zeta$ if $\Pr \left[ |\overline{R^k}-\zeta| \leq \delta \right] \rightarrow 1$ as $k \rightarrow \infty$ for arbitrary positive real number $\delta$, and write $R^k \xrightarrow{P} \zeta$ in such case.

The fidelity of two probability distributions $P$ and $Q$ of the same random variable $X$ is defined as 

\begin{equation} \label{classfid}
F\left[P(X),Q(X)\right]=\left[\sum_{x}\sqrt{P(x)Q(x)}\right]^2.
\end{equation}
It is a quantity regarded as a measure of distinguishability of probability distributions such that $0\leq F(P,Q)\leq 1$, and $F(P,Q)=1$ if and only if $P=Q$.

A communication system consists of a sender encoding messages emitted by a source to be transmitted through a channel -- the physical medium allowing communication between distant parties -- and decoded by a receiver. The channel is described by a probability distribution of the output variable $Y$ conditional to the input variable $X$, and denoted by $P(Y|X)$. Moreover, the channel can be understood as a process that transforms input probability distributions $P(X)$ into output ones $P(Y)$ by the relation $P(Y) = \sum_{x \in \mathcal{X}} P(Y|x)P(x)$.

The quantum analogue of the alphabet of a random variable is the space of pure quantum states. Therefore, to each random variable $X$ with alphabet $\mathcal{X}$ we assign a collection of unit orthogonal vectors $\{\ket{x}\}_{x \in \mathcal{X}}$, and define the associated Hilbert space by any linear composition of them denoted by $\mathcal{H}_{X}\coloneqq \spn\ \{\ket{x}\}_{x \in \mathcal{X}}\cong \mathbb{C}^{|\mathcal{X}|}$. We denote the space of linear operators acting on a vector space $\mathcal{H}$ by $\mathcal{L}(\mathcal{H})$, and the set of states of the corresponding quantum system -- positive linear operators with unit trace -- as $\mathcal{D}(\mathcal{H})$.

A linear map $\mathcal{M}^{X\rightarrow Y}:\mathcal{L}(\mathcal{H}_{X})\rightarrow \mathcal{L}(\mathcal{H}_{Y})$ is called completely positive (CP) if its extension $\mathcal{I}^{R} \otimes \mathcal{M}^{X\rightarrow Y}$ to any finite dimensional system $R$ is positive, where $\mathcal{I}^{R}:\mathcal{L}(\mathcal{H}_{R})\rightarrow \mathcal{L}(\mathcal{H}_{R})$ is the identity map. We say $\mathcal{M}^{X\rightarrow Y}$ is trace preserving (TP) if $\Tr_{Y}\circ \mathcal{M}^{X\rightarrow Y} = \Tr_{X}$. A quantum channel is a linear completely positive and trace preserving (CPTP) map. We refer the reader to \cite{wilde2011classical} for a more complete discussion on quantum channels.

The fidelity of the quantum states $\rho$ and $\sigma$ is defined here as $F(\rho,\sigma)\coloneqq \left[\Tr(\sqrt{\sqrt{\rho}\sigma\sqrt{\rho}})\right]^{2}$, following the original definition in \cite{jozsa}. If the states considered are diagonal in the same basis, assuming the form $\rho=\sum_x P(x)\ket{x}\bra{x}$ and $\sigma=\sum_x Q(x)\ket{x}\bra{x}$, then Equation (\ref{classfid}) is recovered. 

\section{Overview on Shannon Compression} \label{shannon}
 
The compression of a source is defined by the task of minimizing redundancy of the messages to be sent through a channel relying on its probability distribution, and thus maximizing the efficiency of the communication protocol. This section is based on \cite{cover,csiszarkorner}.

Shannon`s compression theorem states that there is an encoding procedure that is asymptotically reliable, and the goal of this section is to make mathematically precise such terminology. Shannon's compression method is based on the law of large numbers stating that a sequence of i.i.d real valued random variables $R^k$ with finite variance converges in probability to its expected value, $\overline{R^k} \xrightarrow{P} E(R)$. Taking $R_k= \log P(X_i)$, the law of large numbers implies the convergence $\overline{\log P(X^k)} \xrightarrow{P} H(X)$, which means that for any positive real number $\delta$ we have

\begin{equation} \label{largenumbers}
\Pr \left[ \left| \frac{1}{k} \log P(X_1,\cdots,X_k) - H(X) \right|\leq \delta  \right] \rightarrow 1 
\end{equation}
as $k \rightarrow \infty$. 

It suggests we can take the set of likely sequences as consisting of the so called $\delta$-typical sequences defined by the property $|\overline{\log P(X^k)} - H(X)|\leq \delta$, where $\delta$ can be made as any arbitrarily small positive real number. The set of all $\delta$-typical sequences of length $k$, for which the probability can be made arbitrarily close to unit for sufficiently large $k$, is denoted here by $T_{\delta}^{(k)}$. Moreover, we can encode only the typical sequences and neglect the atypical ones with probability of decoding a sequence wrong arbitrarily close to zero, as it follows from Equation (\ref{largenumbers}) that $\Pr\left[ \mathcal{X}^{k}\setminus T_{\delta}^{(k)}\right] \rightarrow 0$ as $k \to \infty$. 

It can be shown that the number of typical sequences one needs to encode ranges as $(1-\delta)2^{\left[k(H(X)-\delta) \right]} \leq | T_{\delta}^{(k)} | \leq 2^{\left[k(H(X)+\delta) \right]}$, and for $\delta \rightarrow 0$ one needs to encode $| T_{0}^{(k)} | = 2^{kH(X)}$ sequences with probability of error arbitrarily close to zero if $k$ is large enough ($k\to \infty$). Note that $H(X)$ may not be an integer and therefore $| T_{0}^{(k)} |$ should be replaced in the equation above by the smallest integer greater than or equal to it, but we keep this notation for simplicity. It means that we can reliably compress a variable by encoding only $2^{kH(X)}$ sequences, rather than encoding all the $\vert \mathcal{X} \vert^{k}$. This is often called as the direct part of the Shannon compression theorem: the proof of existence of a compression-decompression scheme 

\begin{equation}
\left(f:\mathcal{X}^{k}\rightarrow \{0,1\}^{kH(X)},g: \{0,1\}^{kH(X)} \rightarrow \mathcal{X}^{k} \right) \nonumber
\end{equation} 
with arbitrarily small probability of error defined by $\Pr\left[ g\circ f (X^k)\neq X^k \right]$ for sufficiently large $k$. The only case where there is no advantage in this encoding method is when all outcomes are uniformly distributed, and therefore all the sequences are typical. The converse part of the Shannon compression theorem states that for any compression-decompression scheme $\left(f:\mathcal{X}^{k}\rightarrow \{0,1\}^{n},g: \{0,1\}^{n} \rightarrow \mathcal{X}^{k} \right)$ with a compression rate $\frac{n}{k}$ less than $H(X)$ has an arbitrarily large probability of error.

\section{Classical Reverse Compression} \label{classical}

\subsection{Single-Shot Settings}

We start with a definition of preprocessing for single-shot scenarios, i.e., information processing protocols with a single use of the channel. For single-shot settings the definition of reverse compression takes a simpler and clear form, and furthermore its extension to asymptotic scenarios is an immediate application considered subsequently.

The question addressed here is the possibility of replacement of the input variable of a classical channel by a less redundant one, as it is the case for the Shannon compression method, but without relying on the knowledge of the input variable distribution $P(X)$ nevertheless. Instead, the idea is to look for redundancies in how the input is mapped to the output: if two input variables produce very similar output distributions, then they can be replaced by a single one. Importantly, here one compresses the input based on information of a transformation that will take place in the future, hence the terminology \textit{reverse compression}. Moreover, the main difference between the Shannon compression method and ours lies in the task of communication: the goal of Shannon compression is to communicate a message through a noiseless channel with arbitrarily small probability of error, while reverse compression is defined in the following by channel indistinguishability after preprocessing of data.

Two probability distributions $P$ and $Q$ are equal if and only if their fidelity is equal to unity, $F(P,Q)=1$. Allowing for slightly distinguishable events, we say that probability distributions $P$ and $Q$ are $\epsilon$-indistinguishable if $F[P,Q] \geq 1-\epsilon$, and denote it as $P \myeq Q$.
The reverse compression of a source with respect to a channel consists of replacing the input random variable $X$ by a random variable $Z$ with smaller alphabet such that the output variable $Y$ conditional distribution remains approximately the same. This stage is called here compression. The appropriate mapping from $\mathcal{Z}$ into $\mathcal{X}$ (necessary to feed the variable into the desired channel) is called decompression. Figure \ref{diagram} shows a diagram representing the process.  

\begin{figure}   
\begin{adjustbox}{width=0.4\textwidth}
\begin{tikzpicture}
\draw (5,0) -- (5,1.5); \node[scale=3] at (6,0.75) {$X$};
\node[scale=2.5] at (8.6,2.4) {$\mathcal{X}$};

\draw (5,1.5) -- (2,2); \draw (5,1.5) -- (3,2);
\draw (5,1.5) -- (8,2); \draw (5,1.5) -- (7,2);
\draw (2,2) -- (2,3); \draw (3,2) -- (3,3); 
\node[scale=2] at (2.5,2.5) {$x_1$}; \node[scale=2] at (3.5,2.5) {$\widehat{x}_1$};
\node[scale=2] at (5,2.5) {$\cdots$};
\draw (8,2) -- (8,3); \draw (7,2) -- (7,3); \node[scale=2] at (2.5,2.5) {$x_1$};
\node[scale=2] at (6.5,2.5) {$\widehat{x}_n$}; \node[scale=2] at (7.5,2.5) {$x_n$};

\draw (1,3) rectangle (9,4);
\node[scale=2] at (5,3.5) {$\lambda$};

\draw (2,3) -- (2,4); \draw (3,3) -- (2,4);
\draw (8,3) -- (8,4); \draw (7,3) -- (8,4);

\draw (2,4) -- (2,5); \draw (8,4) -- (8,5);
\node[scale=2] at (3,4.5) {$z=1$}; \node[scale=2] at (7,4.5) {$z=n$};
\node[scale=2] at (5,4.5) {$\cdots$};
\node[scale=2.5] at (8.6,4.45) {$\mathcal{Z}$};

\draw (1,5) rectangle (9,6);
\node[scale=2] at (5,5.5) {$\varphi$};

\draw (2,5) -- (2,7); \draw (8,5) -- (8,7);
\draw (2,7) -- (5,7.5); \draw (8,7) -- (5,7.5);
\node[scale=2] at (2.5,6.5) {$x_1$}; \node[scale=2] at (7.5,6.5) {$x_n$};
\node[scale=2] at (5,6.5) {$\cdots$};

\draw[red,thick,dashed] (0,1.5) rectangle (10,7.5);

\node[scale=3] at (6,8.25) {$X'$};
\node[scale=2.5] at (8.6,6.45) {$\mathcal{X}'$};

\draw (5,7.5) -- (5,9);
\draw (3,9) rectangle (7,12);
\draw (5,12) -- (5,13.5);

\node[scale=3] at (5,10.5) {$P(Y|X')$};
\node[scale=3] at (5.9,13) {$Y$};
\node[scale=3, red] at (0.4,8) {$\Lambda$};

\end{tikzpicture}
\end{adjustbox}
    \caption{Diagram representing the reverse compression of a channel. The overall process inside the dashed box represents the reverse compression $\Lambda$. It consists of a compression operation $\lambda:\mathcal{X} \rightarrow \mathcal{Z}$ and a decompression operation $\varphi: \mathcal{Z} \rightarrow \mathcal{X}'$. In the figure we have $\mathcal{X}'$ defined as a copy of $\mathcal{X}$.}
      \label{diagram}

\end{figure}
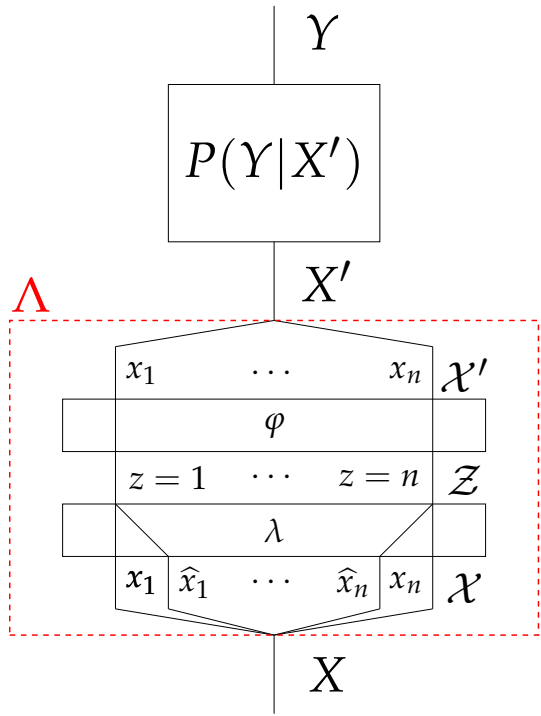

Consider a partition of the input variable's alphabet according to the following criteria. We take sets $\mathcal{A}_z$ of outcomes of $X$ such that, for some arbitrarily fixed $\epsilon > 0$, the conditional probabilities $P\left(Y|X=x\right)$ and $P\left(Y|X=x'\right)$ are $\epsilon$-indistinguishable, i.e.,

\begin{equation}
F \left[ P\left(Y|X=x\right) , P\left(Y|X=x'\right) \right]\geq 1 - \epsilon,
\end{equation}
for any $x$ and $x'$ in $\mathcal{A}_z$.

A collection of non-empty sets $\mathcal{A}_z$ is a partition of the input alphabet $\mathcal{X}$ if they are pair-wise disjoint and they cover $\mathcal{X}$, namely $\cup_z \mathcal{A}_z = \mathcal{X}$. The mapping $\mathcal{A}_z \ni x \mapsto z $ defines a collective random variable $Z$, interpreted as the compressed input. The channel from $Z$ to $Y$ is given by

\begin{equation} \label{decoding}
P(Y|Z=z) = \sum_{x\in \mathcal{A}_z} P(Y|X=x)P(X=x|Z=z),
\end{equation}
where $P(X|Z=z)$ is any distribution with support in $\mathcal{A}_z$, that is, $P(X=x|Z=z)=0$ if $x\not \in \mathcal{A}_z$. 

Given the definition above, we have, for each $x \in \mathcal{A}_z$
\begin{equation}
F\left[ P(Y|X=x) , P(Y|Z=z) \right] \geq 1-\epsilon.
\end{equation}
Note that as the collection $\mathcal{P}_{\epsilon} = \{ \mathcal{A}_z \}_{z \in \mathcal{Z}}$ is a partition of the input alphabet, the random variable $Z$ has a smaller (or equal) cardinality than $X$. It is worth to note that a partition $\mathcal{P}=\{\mathcal{A}_z\}_{z \in \mathcal{Z}}$ of $\mathcal{X}$ is equivalent to a surjective mapping $\lambda:\mathcal{X} \rightarrow \mathcal{Z}$. Each map $\lambda$ of that type corresponds to a partition with elements defined as $\mathcal{A}_z \coloneqq \lambda^{-1}(z)$, where $\lambda^{-1}(z)\coloneqq \{x \in \mathcal{X} | \lambda(x)=z \}$ is the inverse image of the set $\{z\}$ under the mapping $\lambda$. Moreover, a partition is equivalent to an appropriate mapping on the input alphabet. The partition $\mathcal{P}_{\epsilon}$ is called a reverse compression. The channel $P(X|Z)$ in Eq.~\eqref{decoding} is a decompression. For definiteness, we can choose an arbitrary but fixed $x_z\in \mathcal{A}_z$ for each subset and define the decompression as the mapping from $z$ to this $x_z$: $P(X=x|Z=z)=\delta_{x x_z}$. See Figure \ref{venndiagram}. A decompression with such a definition is clearly not unique. It follows the formal definition. Let $0 \leq \epsilon \leq 1$.

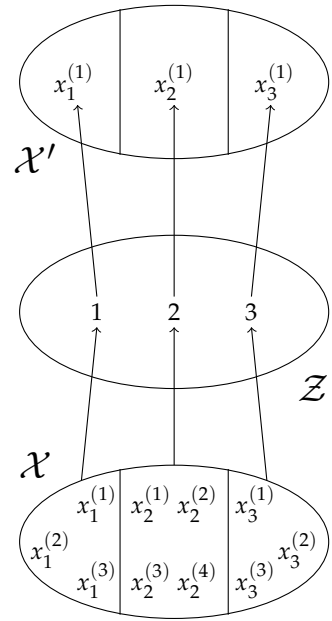
\begin{figure}   
\begin{adjustbox}{width=0.25\textwidth}
\begin{tikzpicture}
  \draw (0,0) ellipse (2cm and 1cm);
  	\draw (-0.7,0.95) -- (-0.7,-0.95);
  	\draw (0.7,0.95) -- (0.7,-0.95);
	\node[scale=1] at (-1,0.5) {$x_1^{(1)}$};
	\node[scale=1] at (-1.6,-0.1) {$x_1^{(2)}$};
	\node[scale=1] at (-1,-0.5) {$x_1^{(3)}$};	
	\node[scale=1] at (-0.3,0.5) {$x_2^{(1)}$};	
	\node[scale=1] at (0.3,0.5) {$x_2^{(2)}$};	
	\node[scale=1] at (-0.3,-0.5) {$x_2^{(3)}$};	
	\node[scale=1] at (0.3,-0.5) {$x_2^{(4)}$};	
	\node[scale=1] at (1.05,0.5) {$x_3^{(1)}$};
	\node[scale=1] at (1.6,-0.1) {$x_3^{(2)}$};
	\node[scale=1] at (1.05,-0.5) {$x_3^{(3)}$};	
	\draw [->] (-1.2,0.80) -- (-1,2.8);	\draw [->] (0,1) -- (0,2.8); \draw [->] (1.2,0.8) -- (1,2.8);
  \draw (0,3) ellipse (2cm and 1cm);
  	\node[scale=1] at (-1.3,6) {$x_1^{(1)}$};	
    \node[scale=1] at (0,6) {$x_2^{(1)}$};
    \node[scale=1] at (1.3,6) {$x_3^{(1)}$};
  \draw (0,6) ellipse (2cm and 1cm);
  	\node[scale=1] at (-1,3) {$1$};	
    \node[scale=1] at (0,3) {$2$};
    \node[scale=1] at (1,3) {$3$};
	\draw [->] (-1,3.2) -- (-1.25,5.7);	\draw [->] (0,3.2) -- (0,5.7); \draw [->] (1,3.2) -- (1.25,5.7);
	\node[scale=1.5] at (-1.8,5) {$\mathcal{X}'$};
	\node[scale=1.5] at (-1.8,1) {$\mathcal{X}$};
	\node[scale=1.5] at (1.8,2) {$\mathcal{Z}$};
  	\draw (-0.7,6.95) -- (-0.7,5.05);
  	\draw (0.7,6.95) -- (0.7,5.05);
\end{tikzpicture}
\end{adjustbox}
    \caption{Venn diagram representing the reverse compression operation.}
      \label{venndiagram}

\end{figure}

\begin{definition}[Single-Shot Reverse Compression] \label{RC} An $\epsilon$-reverse compression of a channel $P(Y|X)$ is the smallest partition $\mathcal{P}_{\epsilon}$ such that each pair $x$ and $x'$ belonging to the same element of $\mathcal{P}_{\epsilon}$ satisfies $F[P(Y|x),P(Y|x')]\geq 1-\epsilon$. 
\end{definition}

The quality of the reverse compression is clearly quantified by the cardinality of the variable $Z$, and for this matter we define the \emph{compressibility} of a channel as

\begin{equation}
\Gamma_{\epsilon}[P(Y|X)] \coloneqq \frac{|\mathcal{X}|-|\mathcal{P}_{\epsilon}|}{|\mathcal{X}|-1}.
\end{equation}
The compressibility defines the amount of redundancy in the input, relative to the channel $P(Y|X)$, and it is clearly a monotonically increasing function of $\epsilon$. It also satisfies $0 \leq \Gamma_{\epsilon}[P(Y|X)] \leq 1$, as the worst possible reverse compression consists in assigning a set to each possible input outcome and the best reverse compression is given by collecting all input outcomes into a single set. 

Two limit cases can be readily identified. For a channel that outputs a variable independent of the input, $P(Y|X=x)=P(Y|X=x')$ for all $x$, $x'$, maximal compression is possible: we can replace all input variables with a single, arbitrary one without any effect on the output statistics, resulting in $\Gamma_{\epsilon}=1$ for all $\epsilon$. On the other end of the spectrum, an identity channel $P(Y=y|X=x)=\delta_{y\,x}$ does not tolerate any compression: for any two distinct inputs $x\neq x'$, the output distributions are perfectly distinguishable, $F[P(Y|x),P(Y|x')]=0$, which means that, for any $\epsilon > 0$, the only $\epsilon$-reverse compression is the partition that assigns each $x\in\mathcal{X}$ to a distinct set, resulting in $\Gamma_{\epsilon}=0$.

\subsection{Asymptotic Settings}
 
One can always take the asymptotic regime as a particular case of the single-shot setting mentioned above by considering the channel acting on infinitely many input variables, thus producing as many output ones. The channel $P(Y|X)$ in the asymptotic regime is a conditional distribution of the type $P(Y_1,Y_2,\cdots|X_1,X_2,\cdots)$. The reverse compression is identified in this case as a global operation on the joint input random variables, as illustrated in Figure \ref{asymptotic}.

\begin{figure}   
\begin{adjustbox}{width=0.4\textwidth}
\begin{tikzpicture}
\draw (1.5,0)--(1.5,-1); \draw (7.5,0)--(7.5,-1); 
\node[scale=2] at (4.5,-0.5) {$\cdots$};
\node[scale=2] at (2,-0.5) {$X_1$};\node[scale=2] at (8,-0.5) {$X_k$}; 
\draw[red,thick] (0,0) rectangle (9,2); 
\node[scale=2] at (4.5,1) {$\Lambda_k$};
\draw (1.5,2)--(1.5,3); \draw (7.5,2)--(7.5,3); 
\node[scale=2] at (2,2.5) {$X_1'$};\node[scale=2] at (8,2.5) {$X_k'$}; 
\node[scale=2] at (4.5,4) {$\cdots$}; 
\draw[thick] (0,3) rectangle (3,5); \draw[thick] (6,3) rectangle (9,5);
\node[scale=2] at (1.5,4) {$P(Y_1|X_1')$};\node[scale=2] at (7.5,4) {$P(Y_k|X_k')$}; 
\draw (1.5,5)--(1.5,6); \draw (7.5,5)--(7.5,6); 
\node[scale=2] at (2,5.4) {$Y_1$};\node[scale=2] at (8,5.4) {$Y_k$}; 

\end{tikzpicture}
\end{adjustbox}
    \caption{The reverse compression of $k$ independent uses of a channel acts as a joint operation on the input of the $k$ input variables. It is a mapping of the type $\Lambda_k:\mathcal{X}^{k}\rightarrow\mathcal{X}'^{k}$.}
      \label{asymptotic}

\end{figure}
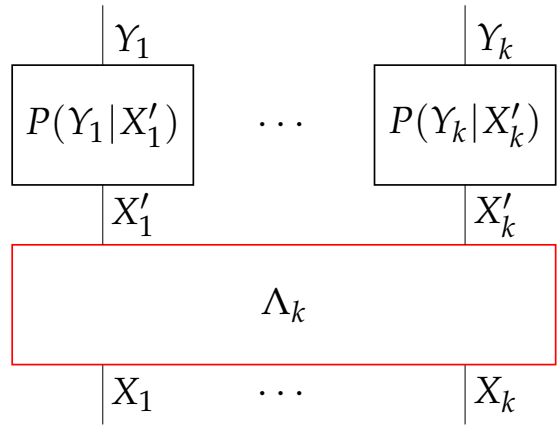

In order to use the previous definition of reverse compression for an asymptotic scenario we need to define a partition with the same properties as above but defined on the space of input sequences. For doing so, consider $k$ copies of the memoryless channel used without feedback, described by 

\begin{equation}
    P(Y^k|X^k)=\prod_{i=1}^{k} P(Y_i|X_i).
\end{equation}

For a channel with such properties, the fidelity of the output distribution conditional on two different inputs takes the simple product form

\begin{equation} \label{eq:fidelityprod}
F[P(Y^k|x^k),P(Y^k|\widehat{x}^k)] = \prod_{i=1}^{k} F[P(Y_i|x_i), P(Y_i|\widehat{x}_i)]
\end{equation}
where $x^k$ and $\widehat{x}^k$ are two arbitrary input sequences. It is clear that $F[P(Y^k|x^k),P(Y^k|\widehat{x}^k)]=1$ if and only if each term $F[P(Y_i|x_i), P(Y_i|\widehat{x}_i)]=1$. A proof of Equation (\ref{eq:fidelityprod}) can be found in the Appendix \ref{fidelity}. It yields the definition of reverse compression for asymptotic settings.

\begin{definition}[Asymptotic Reverse Compression]
An asymptotic $\epsilon$-reverse compression of a memoryless channel used without feedback is a sequence of partitions $\left( \mathcal{P}_{\epsilon}^{(k)} \right)_{k \in \mathbb{N}}$ such that each $\mathcal{P}_{\epsilon}^{(k)}$ is an $\epsilon$-reverse compression of the channel $P(Y^k|X^k)=\prod_{i=1}^{k}P(Y_i|X_i)$.
\end{definition}

The above definition clearly does not define a unique sequence of partitions, but any $\epsilon$-reverse compression is  equivalent for the purpose of the compression task. In the asymptotic regime, we quantify the quality of the method by the asymptotic compressibility defined as

\begin{equation}
\Delta_{\epsilon}[P(Y|X)] \coloneqq \lim_{k \rightarrow \infty} \Gamma_{\epsilon}^{(k)}[P(Y|X)],
\end{equation}
where $\Gamma_{\epsilon}^{(k)}[P(Y|X)]$ is the compressibility of $k$ independent uses of the channel $P(Y|X)$. Similarly to the single-shot case, the asymptotic compressibility is a monotonically increasing function of $\epsilon$. Therefore, the compressibility could have been used in the definition of reverse compression. Alternatively we could have defined it requesting the partition $\mathcal{P}_{\epsilon}$ to have maximum compressibility.

\subsection{Examples} \label{examples}

\subsubsection{The erasure channel}
 
The study case considered here is the erasure channel, outputting the erasure symbol denoted as $\alpha$ with probability $\eta$, and transmitting correctly the input symbol emitted by the source with probability $1-\eta$. If the input space is $\mathcal{X}=\{1,\cdots,r\}$, then the output space is $\mathcal{Y}=\{1,\cdots,r,\alpha\}$. The channel is described by the conditional probability distribution

\begin{equation}
    P_{\eta}(y|x)=(1-\eta) \delta_{x,y} + \eta \delta_{\alpha,y},
\end{equation}
with $0 \leq \eta \leq 1$. 

For the erasure, channel we have (see Appendix \ref{reversefidEC})

\begin{equation}
 F[P_{\eta}(Y^k|x^k),P_{\eta}(Y^k|\widehat{x}^k)]=\eta^{2 S(x^k,\widehat{x}^k)},
\end{equation}
where $S(x^k,\widehat{x}^k)$ is the number of different components of the sequences $x^k$ and $\widehat{x}^k$. For the problem considered here, we want to define the optimal partition -- or appropriate bounds on it -- such that, for each pair of input sequences $x_{z}^k,\widehat{x}_{z}^k$ belonging to the same element of the partition, we have $\eta^{2 S(x^k,\widehat{x}^k)} \geq 1-\epsilon$. Then, for fixed $0<\epsilon,\eta<1$ we must have

\begin{equation}
S(x^k,\widehat{x}^k) \leq \frac{1}{2}\frac{\log(1-\epsilon)}{\log \eta}.
\end{equation}

Consider the case where $\eta=1/2$, i.e., in each use of the channel the input variable $x$ is perfectly mapped into the variable $y=x$ with probability $1/2$ or it is mapped to an error variable $y=\alpha$ with probability $1/2$. In this case, the fidelity threshold reduces to $(0.25)^{ |S(x^k,\widehat{x}^k)|} \geq 1-\epsilon$. Just to be clear, the best partition one could do is $\{\mathcal{X}^k\}$ and the worst possible partition would be $\{\{x^k\}\}_{x^k\in \mathcal{X}^k}$. Let us consider the partition $\{ \{x^k,\widehat{x}^k\}, \{w^k\} \}_{w^k \in \mathcal{X}^k \setminus \{x^k,\widehat{x}^k\}}$, which has $|\mathcal{X}^k| - 1$ elements. The best we could do is to take $x^k$ and $\widehat{x}^k$ with only one different component such that $|S(x^k,\widehat{x}^k)|=1$, for instance $x^k=(0,0,\cdots,0)$ and $\widehat{x}^k=(1,0,\cdots,0)$. In this case, we should allow for an error $\epsilon \geq 0.85$ in fidelity. Thus we cannot compress input variables according to the reverse compression method with arbitrarily high fidelity for the erasure channel. The following example considers an example of asymptotic process with positive compressibility.

A possible way to make sure the reverse fidelity of two $k$-sequences with respect to the channel $P(Y^k|X^k)$,
\begin{equation}
\widetilde{F}(x^k,\widehat{x}^k)\coloneqq F[P(Y^k|x^k),P(Y^k|\widehat{x}^k)],
\end{equation}
is arbitrarily large is exploiting the factorization property stated in the Equation (\ref{eq:fidelityprod}), i.e., if $\widetilde{F}(x_i,\widehat{x}_i)\geq (1-\epsilon)^{\frac{1}{k}}$, then $\widetilde{F}(x^k,\widehat{x}^k) \geq 1-\epsilon$.

Inspired by the erasure channel, consider the case where one can take partitions of the input space -- with respect to the reverse compression -- such that each sequence belonging to the same element of the partition has no more than $s$ different components. If $\mathcal{P}_{\epsilon}^{(k)}$ is a reverse compression of the channel $P(Y^k|X^k)$, then $S(x^k,\widehat{x}^k) \leq s$ for each pair $ x^k,\widehat{x}^k $ belonging to the same element of the partition. We must have then 

\begin{equation}
|\mathcal{P}_{\epsilon}^{(k)} | \leq |\mathcal{X}|^{k-s}. 
\end{equation}
We have furthermore the positive asymptotic compressibility $\Delta_{\epsilon} \geq 1-|\mathcal{X}|^{-s}$. In Appendix \ref{Conjecture}, we conjecture that the equality is achieved. A similar situation for the single-shot regime is considered in the next example.

\subsubsection{The generalized erasure channel} \label{GEC}
 
We can define a generalization of the erasure channel by considering several output erasure symbols as follows. Let $\mathcal{P}=\{\mathcal{A}_{i}\}_{i=1}^{d}$ be a partition of $\mathcal{X}$. Define $P(Y|X)$ to be the channel such that $P(Y=x_i|X=x_i)=(1-\eta_i)$ and $P(Y=\alpha_{i}|X=x_i)=\eta_i$ for any $x_i \in \mathcal{A}_{i}$, with $0 \leq \eta_i \leq 1$, and $i=1,\cdots,d$. See Figure \ref{singleshot}. If $\min\{\eta_i^2\}_{i=1}^{d} \geq 1-\epsilon$, then the mapping $x_i \mapsto x_i^{*}$ is an $\epsilon$-reverse compression of the channel, with an arbitrarily fixed $x_i^{*}$ in $\mathcal{A}_i$. Its compressibility is equal to $\Gamma_{\epsilon}=(\mathcal{X}-d)/(\mathcal{X}-1)$.

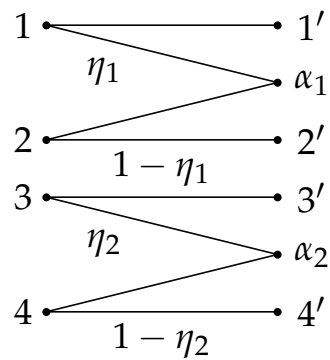
\begin{figure}   
\begin{adjustbox}{width=0.25\textwidth}
\begin{tikzpicture}
\draw (0,0)--(2,0); 
\draw (0,0)--(2,0.5);
\draw (0,1)--(2,1);
\draw (0,1)--(2,0.5); 
\node[scale=2] at (0,-0.02) {$\cdot$}; \node[scale=1] at (-0.2,0.0) {$4$};
\node[scale=2] at (0,0.98) {$\cdot$}; \node[scale=1] at (-0.2,1) {$3$};
\node[scale=2] at (2,0.98) {$\cdot$}; \node[scale=1] at (2.3,1) {$3'$};
\node[scale=2] at (2,0.48) {$\cdot$}; \node[scale=1] at (2.3,0.5) {$\alpha_2$};
\node[scale=2] at (2,-0.02) {$\cdot$}; \node[scale=1] at (2.3,0) {$4'$};

\draw (0,1.5)--(2,1.5); 
\draw (0,1.5)--(2,2);
\draw (0,2.5)--(2,2.5);
\draw (0,2.5)--(2,2); 
\node[scale=2] at (0,1.48) {$\cdot$}; \node[scale=1] at (-0.2,1.5) {$2$};
\node[scale=2] at (0,2.48) {$\cdot$}; \node[scale=1] at (-0.2,2.5) {$1$};
\node[scale=2] at (2,1.48) {$\cdot$}; \node[scale=1] at  (2.3,2) {$\alpha_1$};
\node[scale=2] at (2,2.48) {$\cdot$}; \node[scale=1] at  (2.3,2.5) {$1'$};
\node[scale=2] at (2,1.98) {$\cdot$}; \node[scale=1] at  (2.3,1.5) {$2'$}; 
\node[scale=1] at  (0.5,2.1) {$\eta_1$}; \node[scale=1] at  (0.5,0.6) {$\eta_2$};
\node[scale=1] at  (1,1.25) {$1-\eta_1$}; \node[scale=1] at  (1,-0.25) {$1-\eta_2$};

\end{tikzpicture}
\end{adjustbox}
    \caption{Generalized erasure channel. The picture depicts an example with $\mathcal{X}=\{1,2,3,4\}$, $\mathcal{Y}=\{1',2',3',4',\alpha_1,\alpha_2\}$. With probability $\eta_1$, $\{1,2\}$ are mapped to the erasure symbol $\alpha_1$, while they are unchanged with probability $1-\eta_1$. Similarly, $\eta_2$ is the probability for erasing $\{3,4\}$ to the distinct symbol  $\alpha_2 \neq \alpha_1$.}
      \label{singleshot}
\end{figure}

The next section considers the reverse compression method for quantum communication schemes.

\section{Quantum Reverse Compression} \label{quantum}

In order to extend the reverse compression method to quantum processes it is crucial to properly define the concept of partition for quantum systems, also often called coarse graining. The definition of quantum coarse graining to be considered here aims to deal with the problem of reverse compression, but it is possible to have alternative definitions depending on the problems addressed as it can be found for instance in \cite{busch1993concepts, brandes, walborn, petz2007quantum}.

We start by considering how the classical coarse graining associated with the reverse compression is represented as a CPTP map, i.e., a quantum channel. We then analyse the erasure channel, and confirm that the results remain valid using this framework. After this, we propose a natural generalization of quantum coarse graining, thus defining the reverse compression of a quantum channel.

A coarse graining should map the state space of a system into a state space of smaller dimension, and furthermore it must be regarded as a quantum channel $\Lambda^{X \rightarrow Z} \colon \mathcal{D}(\mathcal{H}_{X}) \rightarrow \mathcal{D}(\mathcal{H}_{Z})$ such that $\mathcal{H}_{Z}$ is a subspace with $\dim(\mathcal{H}_{Z}) \leq \dim(\mathcal{H}_{X})$. Denoting the correspondent classical partition by $\{ \mathcal{A}_z \}_{z\in \mathcal{A}_z}$, we define the projector on the subspace spanned by $\{\ket{x_{z}}\}_{x_{z}\in \mathcal{A}_z}$ as

\begin{equation}
\Pi_{z} \coloneqq \sum_{x_{z}\in \mathcal{A}_z} \left[ x_{z} \right],
\end{equation}
where $\left[ \psi \right]$ stands for the pure state $\ket{\psi} \bra{\psi}$.

The quantum operation corresponding to $\lambda:x_{z}\mapsto z$ is the measurement of a state $\rho$ in the subspace corresponding to $\mathcal{A}_z$ and the preparation of the pure state $\left[ z \right]$, represented by 

\begin{equation}
\Lambda_{z}^{X \rightarrow Z}(\rho) \coloneqq \left[ z \right] \Tr (\Pi_{z} \rho).
\end{equation}
It is a CP mapping $\mathcal{L}(\mathcal{H}_{X}) \rightarrow \mathcal{L}(\mathcal{H}_{Z})$ with decomposition $\Lambda_{z}^{X \rightarrow Z}(\rho)=\sum_{x \in \mathcal{A}_z} K_{x}^{(z)} \rho K_{x}^{(z)\dagger}$ given by Kraus operators defined as $K_{x}^{(z)} =\ket{z} \bra{x}$.

The quantum coarse graining corresponding to a classical partition is defined by summing over all outcomes

\begin{equation} \label{coarseop}
\Lambda^{X \rightarrow Z} \coloneqq \sum_{z \in \mathcal{Z}} \Lambda_{z}^{X \rightarrow Z}.
\end{equation}
It is clearly a CPTP map as its Kraus operators satisfy $\sum_{z \in \mathcal{Z}} \sum_{x \in \mathcal{A}_z} K_{x}^{(z)\dagger}K_{x}^{(z)}  = \mathbb{1}^X$, where $\mathbb{1}^X$ is the identity operator acting on $\mathcal{H}_{X}$.

\textit{Partial trace as a coarse graining.---}Consider the bipartite Hilbert space $\mathcal{H}_X=\mathcal{H}_Z \otimes \mathcal{H}_W$ and the coarse-graining 

\begin{equation}
\Lambda^{X \rightarrow Z}(\rho)=\Tr_{W}(\rho) = \sum_{w}\left( \mathbb{1}^Z \otimes \bra{w} \right) \rho \left( \mathbb{1}^Z \otimes \ket{w} \right), 
\end{equation}
where $\{\ket{w}\}_{w \in \mathcal{W}}$ is an orthonormal basis of $\mathcal{H}^W$. It corresponds to a classical partition of $\mathcal{X}=\mathcal{Z}\times\mathcal{W}$ by setting the elements $\mathcal{A}_z \coloneqq \{(z,w) \colon w \in \mathcal{W}\}$. That is, a uniform partition.
\newline

Similarly to its classical version, the quantum erasure channel outputs the input state $\rho$ with probability $1-\eta$ and outputs a pure {\it erasure state} $\left[ \alpha \right]$ with probability $\eta$. The erasure state is orthogonal to any input one, that is, $\Tr(\left[ \alpha \right]\rho)=0$. It means the input Hilbert space $\mathcal{H}_{X}$ must have smaller dimension than the output Hilbert space $\mathcal{H}_{Y}$, and more precisely $\mathcal{H}_{Y} \cong \mathbb{C}^{|\mathcal{X}|+1}$. This channel is denoted here as
\begin{equation}
\mathcal{M}_{\eta}^{X \rightarrow Y}(\rho)=(1-\eta)\rho + \eta \left[\alpha\right].
\end{equation}

Consider the case where the input states are taken to be pure states of the canonical basis. The reverse compression here consists in defining the optimal partition such that each pair of pure states $\left[x_z\right]$ and $\left[\widehat{x}_z\right]$ satisfies the condition
\begin{eqnarray}
&\widetilde{F}&\left(\left[x_z\right],\left[\widehat{x}_z\right]\right) \coloneqq \nonumber \\
&F&\left(\mathcal{M}_{\eta}^{X \rightarrow Y} \left(\left[x_z\right]\right),\mathcal{M}_{\eta}^{X \rightarrow Y}\left( \left[\widehat{x}_z\right]\right)  \right) \geq 1-\epsilon.
\end{eqnarray}
As the reverse fidelity $\widetilde{F}$ of distinct pure states $\left[x\right]$ and $\left[\widehat{x}\right]$ with respect to the erasure channel is equal to $\eta^{2}$, then the quantum reverse compression goes in the same lines as its classical counterpart: it is either possible to perfectly (reversely) compress every pure state in the canonical basis to the same fixed pure state if $\eta^2 \geq 1-\epsilon $, or it is not possible to (reversely) compress at all if $\eta^2 < 1-\epsilon $.

The coarse graining operation defined in Equation (\ref{coarseop}) maps distinct collections of pure orthogonal states into distinct fixed orthogonal pure states. We can consider more general coarse grainings, without a direct classical analogue, that do not rely on a particular basis choice. In order to define reverse compression for such cases, we introduce the vectorial kernel of a linear map $\Lambda^{X \rightarrow X'}:  \mathcal{L}(\mathcal{H}_{X}) \rightarrow \mathcal{L}(\mathcal{H}_{X'})$, with ${H}_{X} \cong {H}_{X'}$ as the set

\begin{align}
\mathcal{K}\left(\Lambda^{X \rightarrow X'}\right)& \coloneqq \{ \ket{\psi} \in \mathcal{H}_{X'} : \nonumber \\ &\Lambda^{X \rightarrow X'}(\rho)\ket{\psi}=0, \, \forall  \rho \in \mathcal{D}(\mathcal{H}_{X}) \}.
\end{align}
This is clearly a subspace of the associated channel's input vector space. More precisely it is a subspace of the output space of the coarse graining operation, which is defined to be isomorphic to the input one. The range of the map $\Lambda^{X \rightarrow X'}$,

\begin{align}
\mathcal{R}(\Lambda^{X \rightarrow X'}) &\coloneqq  \nonumber \\ &\{ \Lambda^{X \rightarrow X'}(\rho) \in \mathcal{L}(\mathcal{H}_{X'}) \, : \, \rho \in \mathcal{L}(\mathcal{H}_{X}) \},
\end{align}
is equal to the set of linear operators acting on the orthogonal complement of the vector kernel of the map, $\mathcal{R}(\Lambda^{X \rightarrow X'})=\mathcal{L}(\mathcal{K}(\Lambda^{X \rightarrow X'})^{\bot})$, with $\mathcal{K}(\Lambda^{X \rightarrow X'})^{\bot}\coloneqq \{ \ket{\phi} \in \mathcal{H}^{X'}:\bra{\phi}\ket{\psi}=0, \forall \ket{\psi} \in \mathcal{K}(\Lambda^{X \rightarrow X'}) \}$. The range of a quantum coarse graining plays the role of the classical compressed input space $\mathcal{Z}$, in the sense that, after coarse graining, only states from within the range are sent through the channel.

We say channels $\mathcal{R}^{X \rightarrow Y}$ and $\mathcal{T}^{X \rightarrow Y}$ are $\epsilon$-indistinguishable, and denote it as $\mathcal{R}^{X \rightarrow Y} \myeq \mathcal{T}^{X \rightarrow Y}$, if and only if they satisfy $F[\mathcal{R}^{X \rightarrow Y}(\rho),\mathcal{T}^{X \rightarrow Y}(\rho)] \geq 1-\epsilon $ for any state $\rho$ in $\mathcal{H}_{X}$. The reverse compression method for an arbitrary channel in the most general setting is defined as follows.

\begin{definition}[Single-Shot Quantum Reverse Compression] \label{quantumcompression}
An $\epsilon$-reverse compression of the quantum channel $\mathcal{N}^{X\rightarrow Y}$ is a quantum channel $\Lambda^{X \rightarrow X'}$ such that 

\begin{equation}
\mathcal{N}^{X' \rightarrow Y} \circ \Lambda^{X \rightarrow X'}  \myeq \mathcal{N}^{X \rightarrow Y},
\end{equation}
with maximum quantum compressibility

\begin{equation}
\Gamma_{\epsilon}(\mathcal{M}^{X\rightarrow Y})\coloneqq \frac{\dim \mathcal{K}(\Lambda^{X \rightarrow X'}) }{\dim \mathcal{H}^{X}-1}.
\end{equation}

\end{definition}

A non-trivial compression is identified as a non-full range CPTP map $\Lambda^{X \rightarrow X'}:\mathcal{L}(\mathcal{H}_{X}) \rightarrow \mathcal{L}(\mathcal{H}_{X'})$. Considering this most general definition, the quantum reverse compression is clearly a superchannel acting on the channel of interest. A superchannel is a linear transformation of quantum channels into quantum channels, with possibly different dimensions of input and output spaces. A transformation of this type (for the case of single-party) occur in general as an isometric preprocessing and postprocessing with a side identity channel \cite{chiribella2008transforming}. In what follows, we show that, even allowing for arbitrary quantum coarse grainings, the reverse compression holds the same result for the erasure channel.

\begin{theorem}
The quantum erasure channel can be reversely compressed if and only if $\eta^2 \geq 1-\epsilon $.
\end{theorem}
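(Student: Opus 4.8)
The plan is to read the statement through Definition~\ref{quantumcompression}: ``$\mathcal{M}_\eta^{X\to Y}$ can be reversely compressed'' means that there is a channel $\Lambda^{X\to X'}$ with non-trivial vectorial kernel, $\mathcal{K}(\Lambda)\neq\{0\}$, satisfying $\mathcal{M}_\eta^{X'\to Y}\circ\Lambda\myeq\mathcal{M}_\eta^{X\to Y}$, equivalently that the maximal quantum compressibility is strictly positive. The first move is to turn the channel-level condition into a state-level fidelity bound. Since the erasure state $[\alpha]$ is orthogonal to every input state, for any states $\sigma_1,\sigma_2$ the outputs $\mathcal{M}_\eta(\sigma_i)=(1-\eta)\sigma_i\oplus\eta[\alpha]$ are block diagonal with a common erasure block, so the root-fidelity factorizes over the blocks and a short computation yields
\begin{equation}\label{eq:erasurefid}
F\!\left(\mathcal{M}_\eta^{X\to Y}(\sigma_1),\mathcal{M}_\eta^{X\to Y}(\sigma_2)\right)=\Bigl[(1-\eta)\sqrt{F(\sigma_1,\sigma_2)}+\eta\Bigr]^{2},
\end{equation}
the mixed-state version of the ``reverse fidelity $=\eta^2$'' fact already used for pure inputs. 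Setting $\sigma_1=\Lambda(\rho)$, $\sigma_2=\rho$ and using $(1-\eta)\sqrt{F}+\eta\geq 0$, the condition $\mathcal{M}_\eta^{X'\to Y}\circ\Lambda\myeq\mathcal{M}_\eta^{X\to Y}$ becomes equivalent to
\begin{equation}\label{eq:statebound}
(1-\eta)\sqrt{F\bigl(\Lambda(\rho),\rho\bigr)}+\eta\;\geq\;\sqrt{1-\epsilon}\qquad\text{for every state }\rho
\end{equation}
(the case $\eta=1$ being immediate, since then $\eta^2=1\geq 1-\epsilon$ and \eqref{eq:statebound} is vacuous).

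For the ``if'' direction, assume $\eta^{2}\geq 1-\epsilon$, i.e.\ $\eta\geq\sqrt{1-\epsilon}$. Then the left-hand side of \eqref{eq:statebound} is at least $\eta\geq\sqrt{1-\epsilon}$ for every possible value of $F\bigl(\Lambda(\rho),\rho\bigr)$, so \emph{every} channel $\Lambda$ satisfies the requirement. Taking the measure-and-reprepare channel $\Lambda(\rho)\coloneqq[x^{*}]\,\Tr\rho$ onto a fixed canonical vector $\ket{x^{*}}$ (a CPTP map with Kraus operators $\ket{x^{*}}\bra{x}$), its range is $\mathcal{L}(\spn\{\ket{x^{*}}\})$, hence $\dim\mathcal{K}(\Lambda)=\dim\mathcal{H}_{X}-1$ and $\Gamma_\epsilon=1$: the erasure channel is reversely compressible, in fact maximally so.

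For the ``only if'' direction I argue the contrapositive. Assume $\eta^{2}<1-\epsilon$, so $\sqrt{1-\epsilon}-\eta>0$ and \eqref{eq:statebound} forces $F\bigl(\Lambda(\rho),\rho\bigr)\geq\tau_\epsilon\coloneqq\bigl((\sqrt{1-\epsilon}-\eta)/(1-\eta)\bigr)^{2}>0$ for every $\rho$. Suppose, for contradiction, that some valid $\Lambda^{X\to X'}$ had $\mathcal{K}(\Lambda)\neq\{0\}$ and pick a unit vector $\ket{\phi}\in\mathcal{K}(\Lambda)$. By the defining property of the vectorial kernel $\Lambda(\rho)\ket{\phi}=0$ for all $\rho$, so in particular
\begin{equation}
F\bigl(\Lambda([\phi]),[\phi]\bigr)=\bra{\phi}\Lambda([\phi])\ket{\phi}=0<\tau_\epsilon ,
\end{equation}
contradicting the bound. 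Hence every admissible reverse compression has trivial kernel; the best one can do is the full-range channel, $\Gamma_\epsilon=0$, and $\mathcal{M}_\eta$ cannot be reversely compressed.

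The step requiring the most care is the reduction: establishing the fidelity identity \eqref{eq:erasurefid} (the quantum counterpart of the classical erasure-channel computation, following from the orthogonality of $[\alpha]$ to the input sector) and the ensuing equivalence \eqref{eq:statebound}. Once that is in place both directions are immediate; the conceptual point worth stating cleanly is that any non-trivial $\Lambda$ must annihilate some input direction $\ket{\phi}$, hence maps $[\phi]$ to a state with zero overlap with it, which is exactly what a strictly positive fidelity threshold --- present precisely when $\eta^{2}<1-\epsilon$ --- forbids.
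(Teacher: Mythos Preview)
Your proof is correct and follows essentially the same route as the paper: both directions rest on the block-diagonal fidelity identity \eqref{eq:erasurefid}, the direct part uses the constant ``replace by a fixed pure state'' channel to achieve $\Gamma_\epsilon=1$, and the converse picks a unit vector $\ket{\phi}$ in a putative non-trivial vectorial kernel and observes that $F\bigl(\Lambda([\phi]),[\phi]\bigr)=0$ forces the reverse fidelity down to $\eta^2<1-\epsilon$. Your write-up is slightly more explicit in isolating the equivalent state-level bound \eqref{eq:statebound} and the threshold $\tau_\epsilon$, but the argument is the paper's.
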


\begin{proof}

\textit{Direct part.---}If $\eta^2 \geq 1-\epsilon $, then the input space can be perfectly compressed to a fixed state. It can be easily checked from the reverse fidelity for arbitrary input states as

\begin{eqnarray}
&\widetilde{F}&\left[\rho,\Lambda^{X \rightarrow X'}(\rho) \right] \coloneqq \nonumber \\
&F&\left[\mathcal{M}_{\eta}^{X \rightarrow Y}(\rho),\mathcal{M}_{\eta}^{X' \rightarrow Y}(\Lambda^{X \rightarrow X'}(\rho)) \right] = \nonumber \\
&& \left[ (1-\eta)\sqrt{F\left[\rho,\Lambda^{X \rightarrow X'}(\rho)\right]} + \eta \right]^2 \geq \eta^2.
\label{erasurefidelity}
\end{eqnarray}
As a consequence, we can reversely compress all the input space to an arbitrary pure input state according to $\Lambda^{X \rightarrow X'}(\rho)=[1]$. This clearly satisfies the threshold $\widetilde{F}\left[\rho,\Lambda^{X \rightarrow X'}(\rho) \right]\geq 1-\epsilon$, and achieves the highest possible compressibility $\Gamma_{\epsilon}(\mathcal{M}_{\eta}^{X \rightarrow Y})=1$.
\newline 

\textit{Converse part.---}If $\eta^2 < 1-\epsilon $, then no reverse compression is possible -- meaning that any reverse compression $\Lambda^{X \rightarrow X'}$ would be a full range map. In order to check this, suppose there is a non-trivial vector in the vector kernel of the map, $\ket{\psi} \in \mathcal{K}(\Lambda^{X \rightarrow X'})$. In such a case it holds that $F\left([\psi],\Lambda^{X \rightarrow X'}([\psi]) \right)=0$, and therefore, using the expression \eqref{erasurefidelity} for the fidelity of the erasure channel, we find $F\left[\mathcal{M}_{\eta}^{X \rightarrow Y}([\psi]),\mathcal{M}_{\eta}^{X' \rightarrow Y}(\Lambda^{X \rightarrow X'}([\psi])) \right] = \eta^2 < 1-\epsilon $. We conclude that, for $\eta^2 < 1-\epsilon$, the reverse compression map must have a trivial vector kernel, achieving compressibility $\Gamma_{\epsilon}(\mathcal{M}_{\eta}^{X \rightarrow Y})=0$.
\end{proof}

As it was the case for the classical reverse compression, we can easily extend the quantum reverse compression to the asymptotic regime by considering its action on several copies of the input space as follows.

\begin{definition}[Asymptotic Quantum Reverse Compression]
An asymptotic quantum $\epsilon$-reverse compression of a channel $\mathcal{N}^{X\rightarrow Y}$ is a sequence of quantum channels $\left( \Lambda_{k}^{X^k\rightarrow X'^k}: \mathcal{L}\left(\mathcal{H}_{X}^{\otimes k}\right) \rightarrow \mathcal{L}\left(\mathcal{H}_{X'}^{\otimes k}\right) \right)_{k \in \mathbb{N}}$ such that each channel $\Lambda_{k}$ is an $\epsilon$-reverse compression of $k$ independent copies of the channel $\mathcal{N}$, with compressibility $\Gamma_{\epsilon}^{(k)}(\mathcal{N})\coloneqq \Gamma_{\epsilon}\left(\mathcal{N}^{\otimes k}\right)$. The asymptotic compressibility is defined as the limit $\Delta_{\epsilon}(\mathcal{N}^{X\rightarrow Y})\coloneqq \lim_{k\rightarrow \infty} \Gamma_{\epsilon}^{(k)}\left(\mathcal{N}\right)$.

\end{definition}

For the case of separable input states the asymptotic quantum reverse compression works similarly to its classical counterpart. The more general case of entangled input states is left for future studies.

\section{Conclusion} \label{conclusion}

We have considered here the task of preprocessing classical and quantum systems according to a method we called reverse compression. It can be understood as complementary to Shannon compression. While the latter is irrelevant for sources with uniform probability distributions, the former is irrelevant for identity channels, the case where distinct data is perfectly distinguishable for the receiver.  Additionally, we have considered the erasure channel as a study case. This presents an all vs nothing situation where, depending on the amount of noise tolerated, it is either possible to perfectly compress the input of the channel or it is not possible at all. This happens because the noise in the erasure channel does not depend upon the input state. A more detailed analysis of asymptotic settings is left for future studies as well. Moreover, the study of different preprocessing operations, and general channel simulation tasks is an open area.


\begin{acknowledgments}

F.C.\ acknowledges support through an Australian Research Council Discovery Early Career Researcher Award (DE170100712). M.C. acknowledges support from the the funding agency CNPq.  M.C. acknowledges the warmly hospitality of the School of Mathematics and Physics at the University of Queensland, and also thanks to Lucas Chibebe C\'eleri for useful discussions. This study was financed in part by the Coordena\c c\~ ao de Aperfei\c coamento de Pessoal de N\'ivel Superior - Brasil (CAPES) - Finance Code 001. We acknowledge the traditional owners of the land on which the University of Queensland is situated, the Turrbal and Jagera people.
\end{acknowledgments}


\newpage

\begin{widetext}
\appendix

\subsection{Reverse fidelity of memoryless channels used without feedback} \label{fidelity}

The fidelity of conditional probabilities distributions $P(Y^k|x^k)=\prod_{i=1}^{k} P(Y_i|x_i)$ and $P(Y^k|\widehat{x}^k)=\prod_{i=1}^{k} P(Y_i|\widehat{x}_i)$ can be written as $F[P(Y^k|x^k),P(Y^k|\widehat{x}^k)] = \prod_{i=1}^{k} F[P(Y_i|x_i), P(Y_i|\widehat{x}_i)]$. The proof follows as

\begin{eqnarray} \label{fidasymp}
F[P(Y^k|x^k),P(Y^k|\widehat{x}^k)] &=& \left[ \sum_{y^k} \sqrt{P(y^k|x^k) P(y^k|\widehat{x}^k)} \right]^2 \nonumber \\
                         &=& \left[ \sum_{y^k} \left[\prod_{i=1}^{k} P(y_i|x_i)  P(y_i|\widehat{x}_i)\right]^{\frac{1}{2}} \right]^2 \nonumber \\
                         &=& \left[ \sum_{y^k} \prod_{i=1}^{k}  \sqrt{P(y_i|x_i)  P(y_i|\widehat{x}_i)} \right]^2 \nonumber \\
                         &=&  \prod_{i=1}^{k} \left[ \sum_{y_i}  \sqrt{P(y_i|x_i)  P(y_i|\widehat{x}_i)} \right]^2 \nonumber \\
                         &=&  \prod_{i=1}^{k} F\left[P(y_i|x_i),P(y_i|\widehat{x}_i) \right]. 
\end{eqnarray}

A quantum version of Equation (\ref{fidasymp}) can be achieved by considering the action of $k$ independent copies of a channel $\mathcal{N}^{X^{k} \rightarrow Y^{k}}=\mathcal{N}^{X_{1} \rightarrow Y_{1}}\otimes \cdots \otimes \mathcal{N}^{X_{k} \rightarrow Y_{k}}$ on $k$ i.i.d copies of states $\rho$ and $\sigma$, that is, $\rho^{X^k}=\rho^{X_1}\otimes \cdots \otimes \rho^{X_k}$ and $\sigma^{X^k}=\sigma^{X_1}\otimes \cdots \otimes \sigma^{X_k}$. It follows that

\begin{eqnarray}
F\left[\mathcal{N}^{X^{k} \rightarrow Y^{k}}(\rho^{X^k}),\mathcal{N}^{X^{k} \rightarrow Y^{k}}(\sigma^{X^k})\right] &=& \left[ \Tr_{Y^k} \left[ \sqrt{\sqrt{\mathcal{N}^{X^{k} \rightarrow Y^{k}}(\rho^{X^k})}\mathcal{N}^{X^{k} \rightarrow Y^{k}}(\sigma^{X^k})\sqrt{\mathcal{N}^{X^{k} \rightarrow Y^{k}}(\rho^{X^k})}} \right] \right]^{2} \nonumber \\
					  &=& \prod_{i=1}^k \left[\Tr_{Y_i}\left[ \sqrt{\sqrt{\mathcal{N}^{X_{i} \rightarrow Y_{i}}(\rho^{X_i})}\mathcal{N}^{X_{i} \rightarrow Y_{i}}(\sigma^{X_i})\sqrt{\mathcal{N}^{X_{i} \rightarrow Y_{i}}(\rho^{X_i})}} \right] \right]^2	\nonumber \\
					  &=& \prod_{i=1}^k F\left[\mathcal{N}^{X_{i} \rightarrow Y_{i}}(\rho^{X_i}),\mathcal{N}^{X_{i} \rightarrow Y_{i}}(\sigma^{X_i})\right] .
\end{eqnarray}

\subsection{Reverse fidelity of the erasure channel} \label{reversefidEC}

We present here a detailed derivation of the reverse fidelity of an erasure channel. Recall that the erasure channel is described by the conditional probability distribution $P(y|x)=\left(1-\eta \right)\delta_{x,y}+\eta \delta_{\alpha,y}$, with $x \in \mathcal{X}\coloneqq\{1,\cdots,r\}$, $y \in \mathcal{Y}\coloneqq \{1,\cdots,r,\alpha\}$, and $0 \leq \eta \leq 1$. It follows then

\begin{eqnarray} \label{revfid}
\widetilde{F}(x,\widehat{x}) &\triangleq & F\left[ P(Y|x), P(Y|\widehat{x}) \right] \nonumber \\
							 &=& \left[  \sum_{y\in \mathcal{Y}} \sqrt{\left[\left(1-\eta \right)\delta_{x,y}+\eta \delta_{\alpha,y}\right]\left[\left(1-\eta \right)\delta_{\widehat{x},y}+\eta \delta_{\alpha,y}\right]} \right]^{2}  \nonumber \\
							 &=& \left[ \sum_{y\in \mathcal{Y}} \sqrt{\left(1-\eta \right)^{2} \delta_{x,y} \delta_{\widehat{x},y} + \left(1-\eta \right)\eta \delta_{x,y} \delta_{\alpha,y} + \eta \left(1-\eta \right)\delta_{\alpha,y} \delta_{\widehat{x},y} + (\eta \delta_{\alpha,y})^{2}}  \nonumber  \right]^{2} \\
							 &=& \begin{cases}
                       				 1, \qquad  \,\,\, \text{$x=\widehat{x}$} \\
                        			 \eta^{2}, \qquad \text{$x\neq\widehat{x}$}
                    			 \end{cases}
\end{eqnarray}

Considering Equation (\ref{revfid}) we get the reverse fidelity of a memoryless erasure channel used without feedback

\begin{eqnarray}
\widetilde{F}(x^k,\widehat{x}^k) &\triangleq & F\left[ P(Y^k|x^k), P(Y^k|\widehat{x}^k) \right] \nonumber \\
								 &=& \prod_{i=1}^{k} F\left[ P(Y_i|x_i), P(Y_i|\widehat{x}_i) \right] \nonumber \\	
								 &=& \left( \eta^{2} \right)^{S(x^k,\widehat{x}^k)}
\end{eqnarray}
where $S(x^k,\widehat{x}^k)$ is the number of different components of $x$ and $\widehat{x}^k$.

\subsection{On the conjecture considered in section \ref{examples}} \label{Conjecture}

We consider here the asymptotic reverse compression of channels, allowing for some error $\epsilon$ in fidelity, such that the elements of the associated partition $\mathcal{P}_{\epsilon}^{(k)}$ cannot have more than $s$ different components. First we show the validity of the upper bound 
\begin{equation}
|\mathcal{P}_{\epsilon}^{(k)} | \leq |\mathcal{X}|^{k-s}.
\label{upperbound}
\end{equation}

It can be proved by building a partition with such properties and with cardinality equal to $|\mathcal{X}|^{k-s}$. As an example, we consider binary sequences ($\mathcal{X}=\{0,1\}$) of length $k=3$. All possible sequences can be written in lexicographical order as follows: $000$, $001$, $010$, $011$, $100$, $101$, $110$ and $111$. If $s=0$ there is nothing to be done as the only possible partition is the trivial $\{\{x^3\}\}_{x^3}$. Therefore, in this case we can only have $|\mathcal{P}_{\epsilon}^{(3)} | = 8 = |\mathcal{X}|^{3}$. For $s=1$, we can merge two sequences together as in the partition $\{\mathcal{A}_1, \mathcal{A}_2, \mathcal{A}_3, \mathcal{A}_4 \}$, where $\mathcal{A}_1=\{000, 001\}$, $\mathcal{A}_2=\{010, 011\}$, $\mathcal{A}_3=\{100, 101\}$ and $\mathcal{A}_4=\{110 ,111\}$. This gives us $|\mathcal{P}_{\epsilon}^{(3)} | = 4 = |\mathcal{X}|^{3-1}$. For $s=2$, it is possible to define a partition following the same reasoning, $\{\mathcal{A}_1 \cup \mathcal{A}_2 = \{000, 001, 010, 011\}, \mathcal{A}_3 \cup \mathcal{A}_4 = \{100, 101, 110 ,111\}\}$, and for $s=3$ all the sequences can be merged together generating the partition $\{\mathcal{A}_1 \cup \mathcal{A}_2 \cup \mathcal{A}_3 \cup \mathcal{A}_4\}$, in both cases saturating the bound \eqref{upperbound}. This procedure can clearly be extended to arbitrary $k$-ary input spaces, and there is a partition $\mathcal{Q}_{s}^{(k)}$ with the desired properties, and such that $|\mathcal{Q}_{\epsilon}^{(k)} | = |\mathcal{X}|^{k-s}$.

For the preceding example it can be shown there is no partition with the defining properties such that it has fewer than $|\mathcal{X}|^{k-s}$ elements. For $s=0$ and $s=3$, the proof is trivial. In order to prove that for $s=2$, define the partition of $\{0,1\}^3$ given by sets $\mathcal{B}_1=\{000,111\}$, $\mathcal{B}_2=\{001,110\}$, $\mathcal{B}_3=\{010,101\}$ and $\mathcal{B}_4=\{011,100\}$. Such partition is defined in a way that each element $\mathcal{B}_j$ consists of sequences with $s = 3$ different components. Therefore, the desired assertion follows by the pigeonhole principle for $s=2$ as each $\mathcal{B}_j$ contains two strings. That is, we can at most take a partition with two elements $\mathcal{A}_1$ and $\mathcal{A}_2$, such that each $\mathcal{A}_i$ ($i=1,2$) contains at most one string in $\mathcal{B}_j$ ($j=1,\cdots,4$). For instance, one such partition could be with elements $\mathcal{A}_1=\{000,001,010,100\}$, and $\mathcal{A}_1=\{111,110,101,011\}$. For $s=1$, we can prove it by considering the partition with sets $\mathcal{B}_1=\{000,011,110,101\}$ and $\mathcal{B}_1=\{111,100,001,010\}$. It is defined in such a way that $x,y \in \mathcal{B}_j$ ($j=1,2$), inplies $S(x,y)\geq 2$. Therefore, the proof follows similarly by the pigeonhole principle as each $\mathcal{B}_j$ contains 4 strings. We refer the reader to \cite{merris} for a clear presentation on the application of the pigeonhole principle to problems in combinatorics. Note that for $k=4$ it is already not so simple to show that, as the same line of reasoning does not hold. By this reason, we conjecture that the equality must hold for any value of $k$.

\subsection{Asymptotic compressibility of the generalized eresure channel} \label{ACGEC}

We show here that the generalized erasure channel defined in \ref{GEC} has zero asymptotic compressibility. It can be seen by exploiting its definition that $\widetilde{F}(x_i,x_j)=0$, for $x_i \in \mathcal{A}_i$ and $x_j \in \mathcal{A}_j$, with $i \neq j$. Therefore, if $\epsilon < 1$ and one wants to reversely compress the input sequences of $\mathcal{X}^k$, one cannot map sequences with components belonging to different sets $\mathcal{A}_i$ to the same compressed one. Otherwise, it would lead to zero reverse fidelity. Then, the best we can do is to compress the sequences with all the components in the same set $\mathcal{A}_i$ to the same sequence. That means we are exploiting the partition 

\begin{equation}
\mathcal{X}^k = \left( \bigcup_{i=1}^{d} \mathcal{A}_i^{k} \right) \cup \mathcal{B},
\end{equation}
where $\mathcal{B}$ contains all the sequences which is not in any of the sets $\mathcal{A}_i^{k}$. Precisely, the partition $\mathcal{P}_{\epsilon}^{(k)}=\{\mathcal{A}_1^{k},\cdots,\mathcal{A}_d^{k},\mathcal{B}\}$ has cardinality $|\mathcal{P}_{\epsilon}^{(k)}|=d+|\mathcal{B}|$. 

Note that $\mathcal{X}= \cup_{i=1}^{d} \mathcal{A}_i$ and $|\mathcal{X}|=\sum_{i=1}^{d} |\mathcal{A}_i|$. Therefore, it follows that

\begin{equation}
|\mathcal{X}^k| = \left( \sum_{i=1}^{d} |\mathcal{A}_i| \right)^k = \sum_{i=1}^{d} |\mathcal{A}_i|^k + |\mathcal{B}|,
\end{equation}
and we must have 
\begin{equation}
|\mathcal{B}|= \left( \sum_{i=1}^{d} |\mathcal{A}_i| \right)^k - \sum_{i=1}^{d} |\mathcal{A}_i|^k.
\end{equation}

By definition, the compressibility of $k$ independent uses of the generalized erasure channel is upper bounded by

\begin{equation}
	\Gamma_{\epsilon}^{(k)} \leq \frac{\sum_{i=1}^{d} |\mathcal{A}_i|^k-d}{\left( \sum_{i=1}^{d} |\mathcal{A}_i| \right)^k-1}.
\end{equation}
Moreover, if $\epsilon < 1$, then the asymptotic compressibility of the generalized erasure channel is $\Delta_{\epsilon}=0$.
\end{widetext}
\end{document}